\providecommand{\algorithmname}{Algorithm}
\theoremstyle{remark}
\newtheorem{theorem}{Theorem}
\newtheorem{note}{Note}
\newtheorem{lemma}{Lemma}
\theoremstyle{remark}
\newtheorem{example}{Example}
\title{Optimal Scalar Linear Index Codes for Symmetric and Neighboring Side-information Problems} 
\begin{document}

\author{Mahesh~Babu~Vaddi~and~B.~Sundar~Rajan\\ 
 Department of Electrical Communication Engineering, Indian Institute of Science, Bengaluru 560012, KA, India \\ E-mail:~\{vaddi,~bsrajan\}@iisc.ac.in}
 
\maketitle
\begin{abstract}
A single unicast index coding problem (SUICP) is called symmetric neighboring and consecutive (SNC) side-information problem if it has $K$ messages and $K$ receivers, the $k$th receiver $R_{k}$ wanting the $k$th message $x_{k}$ and having the side-information $D$ messages immediately after $x_k$ and $U$ ($D\geq U$) messages immediately before $x_k$. Maleki, Cadambe and Jafar obtained the capacity of this SUICP(SNC) and proposed $(U+1)$-dimensional optimal length vector linear index codes by using Vandermonde matrices. However, for a $b$-dimensional vector linear index code, the transmitter needs to wait for $b$ realizations of each message and hence the latency introduced at the transmitter is proportional to $b$. For any given single unicast index coding problem (SUICP) with the side-information graph $G$, MAIS($G$) is used to give a lowerbound on the broadcast rate of the ICP. In this paper, we derive MAIS($G$) of SUICP(SNC) with side-information graph $G$. We construct scalar linear index codes for SUICP(SNC) with length $\left \lceil \frac{K}{U+1} \right \rceil - \left \lfloor \frac{D-U}{U+1} \right \rfloor$. We derive the minrank($G$) of SUICP(SNC) with side-information graph $G$ and show that the constructed scalar linear index codes are of optimal length for SUICP(SNC) with some combinations of $K,D$ and $U$. For SUICP(SNC) with arbitrary $K,D$ and $U$, we show that the length of constructed scalar linear index codes are atmost two index code symbols per message symbol more than the broadcast rate. The given results for SUICP(SNC) are of practical importance due to its relation with topological interference management problem in wireless communication networks. 
\end{abstract}
\section{Introduction and Background}
\label{intro}
\IEEEPARstart{A}n index coding problem (ICP), comprises a transmitter that has a set of $K$ messages $\{ x_0,x_1,\ldots,x_{K-1}\}$, and a set of $m$ receivers $\{ R_0,R_1,\ldots,R_{m-1}\}$. Each receiver, $R_k=(\mathcal{K}_k,\mathcal{W}_k)$, knows a subset of messages, $\mathcal{K}_k \subset X$, called its \textit{side-information}, and demands to know another subset of messages, $\mathcal{W}_k \subseteq \mathcal{K}_k^\mathsf{c}$, called its \textit{Want-set}. The transmitter can take cognizance of the side-information of the receivers and broadcast coded messages, called the index code, over a noiseless channel. The objective is to minimize the number of coded transmissions, called the length of the index code, such that each receiver can decode its demanded message using its side-information and the coded messages.

Index coding problem with side-information is motivated by wireless broadcasting applications. In applications like video-on-demand, during the first transmission by the transmitter, each receiver may miss a part of data. The transmitter wants to transmit the missing packets to each receiver by taking the cognizance of the already received packets (side-information) at the receivers. 

The problem of index coding with side-information was introduced by Birk and Kol \cite{BiK}. An index coding problem is called unicast \cite{OnH} if the demand sets of the receivers are disjoint. An index coding problem is called single unicast if the demand sets of the receivers are disjoint and the size of each demand set is one. Any unicast index problem can be equivalently reduced to a single unicast index coding problem (SUICP) \cite{YBJK}. In a single unicast index coding problem, the number of messages equal to the number of receivers. Any SUICP with $K$ messages $\{x_0,x_1,\ldots,x_{K-1}\}$ can be expressed as a side-information graph $G$ with $K$ vertices $\{x_0,x_1,\ldots,x_{K-1}\}$. In $G$, there exists an edge from $x_i$ to $x_j$ if the receiver wanting $x_i$ knows $x_j$.

A scalar linear index code of length $N$ $(<K)$ is represented by a matrix $\mathbf{L}$ $(\in \mathbb{F}_q^{K \times N})$, where the $j$th column contains the coefficients of the $j$th coded transmission and the $k$th row $L_k$ $(\in \mathbb{F}_q^{1\times N})$ contains the coefficients used for mixing message $x_k$ in the $N$ transmissions. The broadcast vector is 
\begin{align*}
 c=xL=\sum_{k=0}^{K-1}x_kL_k.
\end{align*}

The broadcast rate \cite{ICVLP} of an index coding problem is the minimum (minimization over all mapping including nonlinear and all dimensions) number of index code symbols required to transmit such that every receiver can decode its wanted message by using the broadcasted index code symbols and its side-information. The capacity of an index coding problem is the reciprocal of the braodcast rate.

Maleki,  Cadambe and Jafar \cite{MCJ} found the capacity of SUICP(SNC) with $K$ messages and $K$ receivers, each receiver has a total of $U+D<K$ side-information, corresponding to the $U$ messages before and $D$ messages after its wanted message. In this setting, the $k$th receiver $R_{k}$ demands the message $x_{k}$ having the side-information
\begin{equation}
\label{antidote}
{\cal K}_k= \{x_{k-U},\dots,x_{k-2},x_{k-1}\}\cup\{x_{k+1}, x_{k+2},\dots,x_{k+D}\}.
\end{equation}

Let $G$ be the side-information graph of this setting. The capacity of this index coding problem setting is:
\begin{equation}
\label{capacity}
C(G)=\left\{
                \begin{array}{ll}
                  {1 ~~~~~~~~~~~~ \mbox{if} ~~ U+D=K-1}\\
                  {\frac{U+1}{K-D+U}} ~~~ \mbox{if} ~~U+D\leq K-2. 
                  \end{array}
              \right.
\end{equation}
where $U,D \in$ $\mathbb{Z},$ $0 \leq U \leq D$. That is, the broadcast rate of this index coding problem is 
\begin{align}
\label{capacity5}
\beta(G)=\frac{K-D+U}{U+1}.
\end{align}

Maleki \textit{et.al.} proved that the capacity \eqref{capacity} is achievable by $(U+1)$-dimensional vector linear index codes by using Vandermonde matrices over higher field sizes. In SUICP(SNC), if $U=0$, we refer it as one-sided side-information SUICP(SNC). In the one-sided side-information case, i.e., the cases where $U$ is zero, the $k$th receiver $R_{k}$ demands the message $x_{k}$ having the side-information,
\begin{equation}
\label{antidote1}
{\cal K}_k =\{x_{k+1}, x_{k+2},\dots,x_{k+D}\}, 
\end{equation}
\noindent
for which \eqref{capacity} reduces to
\begin{equation}
\label{capacity1}
C=\left\{
                \begin{array}{ll}
                  {1 ~~~~~~~~~~~~ \mbox{if} ~~ D=K-1}\\
                  {\frac{1}{K-D}} ~~~~~~~ \mbox{if} ~~D\leq K-2 
                  \end{array}
              \right.
\end{equation}
symbols per message.
 
Jafar \cite{TIM} established the relation between index coding problem and topological interference management problem. The SUICP(SNC) is motivated by topological interference management problem. Consider a cellular network where a user can receive and decode the signal transmitted by neighboring base stations. From an interference alignment prospective, this type of problem can be modelled as the SUICP(SNC).

Consider a single unicast index coding problem with $K$ messages and side information graph $G$. Let $$\mathbf{e}_k=(\underbrace{0~0~~\ldots~0}_{k-1}~1~\underbrace{0~0~\ldots~0}_{K-k}) \in \mathbb{F}_q^K.$$ The support of a vector $\mathbf{u} \in \mathbb{F}_q^K$ is defined to be the set supp$(\mathbf{u})=\big\{k \in [0:K-1]: u_k \neq 0\big\}$. Let $\mathbf{E} \subseteq [0:K-1]$. We denote $\mathbf{u} \lhd \mathbf{E}$ whenever supp$(\mathbf{u}) \subseteq \mathbf{E}$. Then, the $\text{minrank}_q(G$) over $\mathbb{F}_q$ is defined \cite{ECIC} as $\text{min}\{\text{rank}_{\mathbb{F}_q}(\{\mathbf{v}_k+\mathbf{e}_{k}\}_{k \in [0:K-1]}:\mathbf{v}_k \in \mathbb{F}_q^K, \mathbf{v}_k \vartriangleleft \mathcal{K}_k\}.$ For the side-information graph $G$, we define $\text{minrank}(G)$ is the minimum value of of $\text{minrank}_q(G)$ over all fields $\mathbb{F}_q$. In \cite{ECIC}, it was shown that for any given index coding problem, the length of an optimal scalar linear index code over a field $\mathbb{F}_q$ is equal to the $\text{minrank}_q(G)$ of its side-information graph. However, finding the minrank for any arbitrary side-information graph is NP-hard \cite{ECIC}.

For a graph $G$, the order of an induced acyclic sub-graph formed by removing the minimum number of vertices in $G$, is called $MAIS(G)$. In \cite{YBJK}, it was shown that $MAIS(G)$ lower bounds the broadcast rate. That is, $\beta(G) \geq MAIS(G).$

\subsection{Review of AIR matrices}
In \cite{VaR2}, for any arbitrary positive integers $m$ and $n$, we constructed binary matrices of size $m \times n (m\geq n)$ such that any $n$ adjacent rows of the matrix are linearly independent over every field. We refer these matrices as AIR matrices.

The matrix obtained by Algorithm \ref{algo2} is called the $(m,n)$ AIR matrix and it is denoted by $\mathbf{L}_{m\times n}.$ The general form of the $(m,n)$ AIR matrix is shown in   Fig. \ref{fig1}. It consists of several submatrices of different sizes as shown in Fig.\ref{fig1}.
\begin{figure*}
\centering
\includegraphics[scale=0.36]{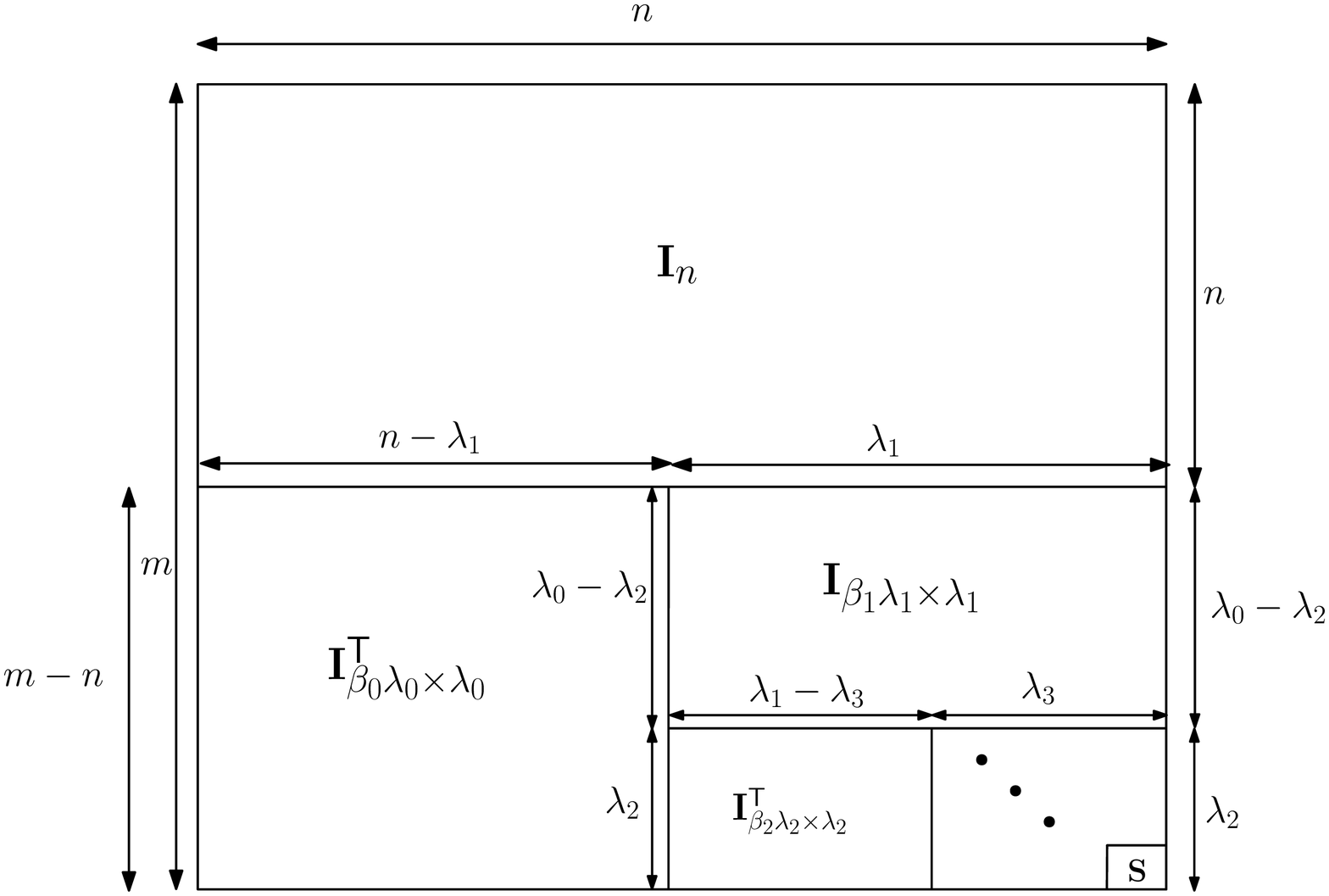}\\
~ $\mathbf{S}=\mathbf{I}_{\lambda_{l} \times \beta_l \lambda_{l}}$ if $l$ is even and ~$\mathbf{S}=\mathbf{I}_{\beta_l\lambda_{l} \times \lambda_{l}}$ otherwise.
\caption{AIR matrix of size $m \times n$.}
\label{fig1}
~ \\
\hrule
\end{figure*}
The description of the submatrices are as follows: Let $c$ and $d$ be two positive integers and $d$ divides $c$. The following matrix  denoted by $\mathbf{I}_{c \times d}$ is a rectangular  matrix.
\begin{align}
\label{rcmatrix}
\mathbf{I}_{c \times d}=\left.\left[\begin{array}{*{20}c}
   \mathbf{I}_{d}  \\
   \mathbf{I}_{d}  \\
   \vdots  \\
   \mathbf{I}_{d} 
   \end{array}\right]\right\rbrace \frac{c}{d}~\text{number~of}~ \mathbf{I}_{d}~\text{matrices}
\end{align}
and $\mathbf{I}_{d \times c}$ is the transpose of $\mathbf{I}_{c \times d}.$

Towards explaining the other quantities shown in the AIR matrix shown in Fig. \ref{fig1}, for a given $m$  and $n,$ let  $\lambda_{-1}=n,\lambda_0=m-n$ and\begin{align}
\nonumber
n&=\beta_0 \lambda_0+\lambda_1, \nonumber \\
\lambda_0&=\beta_1\lambda_1+\lambda_2, \nonumber \\
\lambda_1&=\beta_2\lambda_2+\lambda_3, \nonumber \\
\lambda_2&=\beta_3\lambda_3+\lambda_4, \nonumber \\
&~~~~~~\vdots \nonumber \\
\lambda_i&=\beta_{i+1}\lambda_{i+1}+\lambda_{i+2}, \nonumber \\ 
&~~~~~~\vdots \nonumber \\ 
\lambda_{l-1}&=\beta_l\lambda_l.
\label{chain}
\end{align}
where $\lambda_{l+1}=0$ for some integer $l,$ $\lambda_i,\beta_i$ are positive integers and $\lambda_i < \lambda_{i-1}$ for $i=1,2,\ldots,l$.

		\begin{algorithm}
		\caption{Algorithm to construct the AIR matrix $\mathbf{L}$ of size $m \times n$}
			\begin{algorithmic}[2]
				 \item Let $\mathbf{L}=m \times n$ blank unfilled matrix.
				\item [Step 1]~~~
				\begin{itemize}
				\item[\footnotesize{1.1:}] Let $m=qn+r$ for $r < n$.
				\item[\footnotesize{1.2:}] Use $\mathbf{I}_{qn \times n}$ to fill the first $qn$ rows of the unfilled part of $\mathbf{L}$.
				\item[\footnotesize{1.3:}] If $r=0$,  Go to Step 3.
				\end{itemize}

				\item [Step 2]~~~
				\begin{itemize}
				\item[\footnotesize{2.1:}] Let $n=q^{\prime}r+r^{\prime}$ for $r^{\prime} < r$.
				\item[\footnotesize{2.2:}] Use $\mathbf{I}_{q^{\prime}r \times r}^{\mathsf{T}}$ to fill the first $q^{\prime}r$ columns of the unfilled part of $\mathbf{L}$.
			    \item[\footnotesize{2.3:}] If $r^{\prime}=0$,  go to Step 3.	
				\item[\footnotesize{2.4:}] $m\leftarrow r$ and $n\leftarrow r^{\prime}$.
				\item[\footnotesize{2.5:}] Go to Step 1.
				\end{itemize}
				\item [Step 3] Exit.
		
			\end{algorithmic}
			\label{algo2}
		\end{algorithm}

In \cite{VaR1}, we constructed optimal length vector linear index codes for SUICP(SNC) by using AIR matrices. In \cite{VaR4}, we constructed optimal length scalar linear index codes for SUICP(SNC) for a special case when $U+1$ divide both $K$ and $D-U$. In \cite{VaR3}, we gave a low-complexity decoding for SUICP(SNC) with AIR matrix as encoding matrix. By using the low complexity decoding method, every receiver is able to decode its wanted message symbol by simply adding some index code symbols (broadcast symbols). 

\subsection{Contributions}

In a $b$-dimension vector index coding, the transmitter has to wait for $b$-realizations of a given message symbol to perform the index coding. In scalar index codes, the transmitter encodes each realization of the message symbols separately.  In a delay critical environment like real time video streaming, it may not be desirable to wait for $b$-realizations of a given message symbol, hence, scalar index coding is desirable.

\begin{itemize}
\item We show that $MAIS(G)$ of side-information graph $G$ of SUICP(SNC) is $\left \lfloor \frac{K-D+U}{U+1} \right \rfloor$. 
\item By using AIR matrices, we construct scalar linear index codes for SUICP(SNC) with length $\left \lceil \frac{K}{U+1} \right \rceil - \left \lfloor \frac{D-U}{U+1} \right \rfloor$. 
\item We show that the minrank of SUICP(SNC) side-information graph is $\left \lceil \frac{K}{U+1} \right \rceil - \left \lfloor \frac{D-U}{U+1} \right \rfloor$ and hence the constructed scalar linear index codes are of optimal length for some combinations of $K,D$ and $U$. 
\item We give encoding matrices for the scalar linear index codes with length $\left \lceil \frac{K}{U+1} \right \rceil - \left \lfloor \frac{D-U}{U+1} \right \rfloor$ by using AIR matrices.
\end{itemize}

The paper is organized as follows. In Section \ref{sec2}, we derive MAIS for SUICP(SNC) side-information graphs. In Section \ref{sec3}, we give a construction of scalar linear index codes for SUICP(SNC) and we prove that the constructed codes are of optimal length for certain instances of $K,D$ and $U$. We summarize the paper in Section \ref{sec4}.

\section{Maximum Acyclic Induced Subgraph Bound for SUICP(SNC) Side-Information Graphs} 
\label{sec2}
In this section, we derive the MAIS for the side-information graph $G$ of SUICP(SNC). The MAIS in Theorem \ref{thm1} is useful to quantify the gap between the length of the constructed index codes in this paper and MAIS, which is a lowerbound on the braoadcast rate. 

\begin{theorem}
\label{thm1}
Consider the side-information graph $G$ of SUICP(SNC) with $K$ messages, $D$ side-information after and $U$ side-information before the desired message. Then
\begin{align}
\label{mais}
MAIS(G)=\left \lfloor \frac{K-D+U}{U+1} \right \rfloor.
\end{align}
\end{theorem}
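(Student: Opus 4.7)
The plan is to prove both inequalities directly via a gap argument on the cyclic vertex set $[0:K-1]$. One could short-circuit the upper bound by combining $MAIS(G)\le\beta(G)$ with the Maleki--Cadambe--Jafar broadcast rate $\beta(G)=(K-D+U)/(U+1)$ and integrality of $MAIS(G)$, but I would prefer a self-contained combinatorial proof that only uses the structure of $G$.

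For the upper bound I would fix any acyclic induced subgraph on vertices $x_{i_1},\ldots,x_{i_t}$ with $0\le i_1<\cdots<i_t\le K-1$, define the consecutive gaps $g_j=i_{j+1}-i_j$ for $j<t$ and the wrap gap $g_t=K-i_t+i_1$, and note that $\sum_{j=1}^{t}g_j=K$. I would then prove two structural claims. Claim (a): every gap satisfies $g_j\ge U+1$, because if some $g_j\le U$ then, using $U\le D$, the two cyclically adjacent endpoints of that gap each lie in the other's side-information, creating a directed $2$-cycle. Claim (b): at least one gap satisfies $g_j\ge D+1$, because otherwise every cyclically adjacent pair of vertices carries a forward edge and the whole set closes up into the directed cycle $x_{i_1}\to x_{i_2}\to\cdots\to x_{i_t}\to x_{i_1}$. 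Summing the gaps then gives $K\ge(t-1)(U+1)+(D+1)$, hence $t\le(K-D+U)/(U+1)$, and integrality of $t$ yields $t\le\lfloor(K-D+U)/(U+1)\rfloor$.

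For the matching lower bound I would let $n=\lfloor(K-D+U)/(U+1)\rfloor$ and take the arithmetic-progression set $S=\{x_{0},x_{U+1},x_{2(U+1)},\ldots,x_{(n-1)(U+1)}\}$. By the choice of $n$ the largest index satisfies $(n-1)(U+1)\le K-D-1$, so $S$ lies inside an arc strictly shorter than $K-D$. For any $0\le a<b\le n-1$, the in-arc separation $(b-a)(U+1)\ge U+1>U$ rules out the non-wrap backward edge, while the complementary arc has length at least $K-(K-D-1)=D+1$, which rules out both wrap-around edges (the forward wrap requires complementary arc $\le D$, the backward wrap requires $\le U$). Hence the only edge between $x_{a(U+1)}$ and $x_{b(U+1)}$ is the non-wrap forward edge from the smaller to the larger index, so $G[S]$ is an $n$-vertex DAG, matching the upper bound.

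The main bookkeeping subtlety, and where I would focus the care, is correctly enumerating the four possible edge types between two vertices on the cycle---non-wrap vs.\ wrap, forward vs.\ backward---and checking each of them both in the structural claims (a), (b) above and in the construction. Once the arc-length inequality $(n-1)(U+1)\le K-D-1$ is in hand and the hypothesis $U\le D$ is used to collapse the $2$-cycle condition in claim (a), everything else reduces to the elementary summation of cyclic gaps.
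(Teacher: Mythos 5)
Your proof is correct, and your upper bound takes a genuinely different route from the paper. For the lower bound you and the paper use the identical witness set $\{x_0,x_{U+1},\ldots,x_{(n-1)(U+1)}\}$; your case analysis of the four edge types (wrap/non-wrap, forward/backward) is in fact more complete than the paper's, which explicitly rules out wrap-around edges only into $x_0$ and dismisses the remaining backward edges with a one-line appeal to the receivers knowing only $U$ earlier messages. For the upper bound the paper does not argue combinatorially at all: it invokes $MAIS(G)\le\beta(G)$ together with the Maleki--Cadambe--Jafar value $\beta(G)=\frac{K-D+U}{U+1}$ and integrality of $MAIS(G)$ --- exactly the shortcut you chose to avoid. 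Your cyclic-gap argument (every gap between cyclically consecutive chosen vertices is at least $U+1$, else a $2$-cycle using $U\le D$; at least one gap is at least $D+1$, else the chosen vertices close into a directed cycle; summing gives $K\ge (t-1)(U+1)+D+1$) is self-contained, uses only the structure of $G$, and thus makes the theorem independent of the capacity result rather than a corollary of it; the paper's version is shorter given the cited rate. The only point worth a sentence in a write-up is the degenerate case $t=1$ in your claim (b), where the single wrap gap equals $K$ and the claim holds trivially because $D\le K-1$; with that noted, the summation and the floor from integrality go through exactly as you describe.
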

\begin{proof}
Let $t=\left \lfloor \frac{K-D+U}{U+1} \right \rfloor$. Consider the induced subgraph of the $t$ vertices $$\mathcal{A}=\{x_0,x_{(U+1)},x_{2(U+1)},\ldots,x_{(t-1)(U+1)}\}.$$ in $G$. Let this induced graph be $G_1$.

In $G_1$, there exists no edge from $x_{j(U+1)}$ to $x_0$ follows from the fact that $j(U+1)+D<K$ for every $j \in [1:t-1]$ as shown in \eqref{mais2}. 

\begin{align}
\label{mais2}
\nonumber
&j(U+1)+D \leq (t-1)(U+1)+D\\& \nonumber ~~~~=\left(\left \lfloor \frac{K-D+U}{U+1} \right \rfloor -1\right)(U+1)+D \\& \nonumber ~~~~=\left \lfloor \frac{K-D+U}{U+1} \right \rfloor (U+1)-(U+1)+D \\&\nonumber = \left(\frac{K-D+U}{U+1}-\frac{(K-D+U)\text{mod}(U+1)}{U+1}\right)(U+1)\\&\nonumber ~~~~~~~~~~~~~~~~~~~~~~~~~~~~~~-(U+1)+D \\& \nonumber= (K-D+U)-(K-D+U)\text{mod}(U+1)-(U+1)+D\\&= K-1-(K-D+U)~\text{mod}~(U+1) <K.
\end{align}

In $G_1$, there exists no edges from $x_{i(U+1)}$ to $x_{j(U+1)}$ if $i>j$ for every $i \in [0:t-1]$ follows from the fact that every receiver in SUICP(SNC) knows only $U$ messages before its wanted message. Hence, $G_1$ forms an acyclic induced subgraph of $G$ and 
\begin{align}
\label{mais5}
MAIS(G) \geq t.
\end{align}

But, we have 
\begin{align}
\label{mais4}
\nonumber
t+1&=\left \lfloor \frac{K-D+U}{U+1} \right \rfloor +1 \\& > \frac{K-D+U}{U+1}=\beta(G) \geq MAIS(G). 
\end{align}
Hence, from \eqref{mais2} and \eqref{mais5}, we have $MAIS(G)=t$. This completes the proof. 
\end{proof}

\begin{example}
Consider SUICP(SNC) with $K=17,D=6,U=2$. For this SUICP(SNC), from \eqref{capacity5}, $\beta(G)=\frac{13}{3}=4.25$. From Theorem \ref{thm1}, MAIS($G$) for this index coding problem is $4$ and the subgraph induced by $\{x_0,x_3,x_6,x_9\}$ forms a acyclic induced subgraph. 
\end{example}
\begin{example}
Consider SUICP(SNC) with $K=16,D=3,U=2$. For this SUICP(SNC), from \eqref{capacity5}, $\beta(G)=\frac{15}{3}=5$. From Theorem \ref{thm1}, MAIS($G$) for this index coding problem is $5$ and the subgraph induced by $\{x_0,x_3,x_6,x_9,x_{12}\}$ forms a acyclic induced subgraph. 
\end{example}
\section{Optimal Scalar Linear Index Codes for SUICP(SNC)}
\label{sec3}
In Theorem \ref{thm3}, we construct scalar linear index codes for SUICP(SNC) by using AIR matrices.
\begin{theorem}
\label{thm2}
Consider the SUICP(SNC) with $K$ messages, $D$ side-information after and $U$ side-information before the desired message. For this SUICP(SNC), there exists a scalar linear index code with length $\left \lceil \frac{K}{U+1} \right \rceil - \left \lfloor \frac{D-U}{U+1} \right \rfloor$ and the scalar linear index code can be constructed by using AIR matrix.
\end{theorem}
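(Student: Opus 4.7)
The plan is to take the $K\times N$ AIR matrix $\mathbf{L}_{K\times N}$ of Algorithm~\ref{algo2}, with $N=\left\lceil\frac{K}{U+1}\right\rceil-\left\lfloor\frac{D-U}{U+1}\right\rfloor$, as the encoding matrix; the broadcast vector is then $c=x\,\mathbf{L}_{K\times N}$ and the code has length $N$. What needs verification is that every receiver $R_k$ can recover $x_k$ from $c$ and its side-information $\mathcal{K}_k$. This reduces to showing that, for each $k\in[0:K-1]$, the row $L_k$ does not lie in the linear span of the interferer rows $\{L_j:j\notin\mathcal{K}_k\cup\{k\}\}$; equivalently, one must exhibit a decoding vector $\alpha_k\in\mathbb{F}_q^N$ with $L_k\alpha_k^{\mathsf{T}}\neq 0$ and $L_j\alpha_k^{\mathsf{T}}=0$ for every interferer $j$, after which $R_k$ recovers $x_k$ by forming $c\alpha_k^{\mathsf{T}}$ and subtracting the contributions of the known messages.

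The argument for the existence of such an $\alpha_k$ would rest on two facts. First, the defining AIR property: any $N$ consecutive rows of $\mathbf{L}_{K\times N}$ are linearly independent and so span $\mathbb{F}_q^N$. Second, the SUICP(SNC) structure places $\mathcal{K}_k\cup\{k\}=\{k-U,\ldots,k+D\}$ as a block of $U+D+1$ consecutive indices around $k$. I would expand $L_k$ in a basis of $N$ consecutive rows taken mainly from this block and then combine this expansion with the low-complexity decoding idea from \cite{VaR3}, which realizes $\alpha_k$ as an indicator vector summing a small set of index-code symbols; the identity-block structure of $\mathbf{L}_{K\times N}$ then forces the inner product of each interferer row with $\alpha_k$ to vanish.

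The main obstacle is the boundary rows of the AIR matrix that appear when $(U+1)\nmid K$ or $(U+1)\nmid(D-U)$: the recursive stages of Algorithm~\ref{algo2} produce rows that are not pure unit vectors, and the candidate $\alpha_k$ must still annihilate every such boundary row whenever it lies in the interferer set of $R_k$. I would handle this by a case analysis on the position of $k$ relative to the identity-repetition region and the boundary region of the AIR matrix, using the recursion $n=\beta_0\lambda_0+\lambda_1,\ \lambda_0=\beta_1\lambda_1+\lambda_2,\ldots$ from~\eqref{chain} to describe the boundary rows explicitly. As a consistency check, under the divisibility conditions $(U+1)\mid K$ and $(U+1)\mid(D-U)$ no boundary rows arise and the argument should reduce to the scalar construction of \cite{VaR4}; confirming that the boundary cases still admit a valid $\alpha_k$ is the core of the proof.
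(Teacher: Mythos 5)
There is a genuine gap, and it is the central idea of the construction: your encoding matrix is not the one the paper uses, and the one you propose does not work. The paper does \emph{not} apply the $K\times N$ AIR matrix to the original message vector. It first aggregates the $K$ messages into $K_1=\left\lceil \frac{K}{U+1}\right\rceil$ extended symbols $y_j$, each the sum of $U+1$ consecutive messages, and then applies the AIR matrix of size $K_1\times (K_1-D_1)$, $D_1=\left\lfloor\frac{D-U}{U+1}\right\rfloor$, to $y$ (equivalently, the effective $K\times N$ encoder has each AIR row repeated $U+1$ times, as in Lemma~\ref{lemma1}). This aggregation is the interference-alignment step that makes a length of roughly $\frac{K-D+U}{U+1}$ possible: at a receiver the unknown interferers collapse into at most $K_1-D_1-1$ extended unknowns, which the adjacent-row independence of the $K_1\times(K_1-D_1)$ AIR matrix can handle.

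Your direct encoding $c=x\,\mathbf{L}_{K\times N}$ fails because each receiver faces $K-D-U-1$ unknown interferers, typically far more than $N$, and the interferer rows generally span all of $\mathbb{F}_q^N$, so no decoding vector $\alpha_k$ with $L_k\alpha_k^{\mathsf{T}}\neq 0$ and $L_j\alpha_k^{\mathsf{T}}=0$ for all interferers $j$ can exist. Concretely, take $K=20$, $D=9$, $U=2$ (Example~\ref{ex3}), so $N=5$ and $\mathbf{L}_{20\times 5}$ is four stacked copies of $\mathbf{I}_5$. Then $c_0=x_0+x_5+x_{10}+x_{15}$, and $x_{10},x_{15}$ appear in no other transmission; receiver $R_0$ knows neither of them, so it cannot recover $x_0$. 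The divisibility discussion and the boundary-row case analysis you outline address a secondary issue and do not repair this: even in the ``clean'' case $(U+1)\mid K$, $(U+1)\mid (D-U)$ (as in this example) the direct encoding is undecodable, whereas the paper's proof needs only the adjacent-row independence of the smaller AIR matrix after the known extended symbols $y_{j+1},\ldots,y_{j+D_1}$ are eliminated, plus the observation that $D_1(U+1)\leq D$ and that the remaining messages inside the wanted block $y_j$ are side-information.
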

\begin{proof}
Let $K_1=\left \lceil \frac{K}{U+1}\right \rceil$ and $D_1=\left \lfloor \frac{D-U}{U+1} \right \rfloor$. 

Let
\begin{align}
\label{ext1}
\nonumber
&y_k=\sum_{i=0}^{U} x_{(U+1)(k-1)+i}~\text{for}~k \in [0:K_1-2]~\text{and}\\&y_{K_1-1}=\sum_{i=0}^{K~\text{mod}~(U+1)-1} x_{(U+1)(K_1-2)+i}.
\end{align}

That is, in \eqref{ext1}, we convert  $K$ message symbols into $K_1$ extended message symbols $y_k$ for every $k \in [0:K_1-1]$. Let $\mathbf{L}$ be an AIR matrix of size $K_1 \times (K_1-D_1)$. In $\mathbf{L}$, every set of $K_1-D_1$ rows are linearly independent. Let 
\begin{align}
\label{code}
 [c_0~c_1~\ldots~c_{K_1-D_1-1}]=y\mathbf{L}=\sum_{k=0}^{K_1-1}y_kL_k.
\end{align}

We show that every receiver $R_k$ decodes its wanted message symbol $x_k$ for $k \in [0:K-1]$. Let $j=\left \lceil \frac{k}{D+1} \right \rceil$ for every $k \in [0:K-1]$. We have $D_1(U+1) \leq D$ and hence, $R_k$ for $k \in [0:K_1-1]$ knows every other message symbol present in $y_{j+1},y_{j+2},\ldots,y_{j+D_1}$. 
From \eqref{code}, $R_k$ can decode $y_j$ after converting \eqref{code} into a system of $K_1-D_1$ equations with $K_1-D_1$ unknowns by removing the $D_1$ known extended message symbols $y_{j+1},y_{j+2},\ldots,y_{j+D_1}$. $R_k$ knows every other message symbol in $y_j$ except $x_k$. Hence, $R_k$ can decode $x_k$ from \eqref{code} for every $k \in [0:K-1]$. The length of the index code is $K_1-D_1=\left \lceil \frac{K}{U+1} \right \rceil - \left \lfloor \frac{D-U}{U+1} \right \rfloor$. This completes the proof.
\end{proof}
\begin{example}
\label{ex3}
Consider SUICP(SNC) with $K=20,D=9,U=2$. For this SUICP(SNC), from \eqref{capacity5}, $\beta(G)=\frac{13}{3}=4.25$. For this SUICP(SNC), we have $\underbrace{\left \lceil \frac{K}{U+1} \right \rceil}_{K_1} - \underbrace{\left \lfloor \frac{D-U}{U+1} \right \rfloor}_{D_1}=\left \lceil \frac{20}{3} \right \rceil - \left \lfloor \frac{7}{3} \right \rfloor=7-2=5$. We combine $20$ messages $x_k$ for $k \in [0:19]$ into $K_1=7$ extended messages $y_j$ for $j \in [0:6]$ as shown below.
\begin{align*}
&y_0=x_0+x_1+x_2~~~~~~~y_1=x_3+x_4+x_5\\&
y_2=x_6+x_7+x_8~~~~~~~y_3=x_9+x_{10}+x_{11}\\&
y_4=x_{12}+x_{13}+x_{14}~~~~y_5=x_{15}+x_{16}+x_{17}\\&
y_6=x_{18}+x_{19}.
\end{align*}
We use AIR matrix of size $K_1 \times (K_1-D_1)=7 \times 5$ as an encoding matrix to encode extended messages. The AIR matrix of size $7 \times 5$ is given below.

\arraycolsep=1pt
\setlength\extrarowheight{-2.0pt}
{
$$\mathbf{L}_{7 \times 5}=~\left[\begin{array}{*{20}c}
   1 & 0 & 0 & 0 & 0 \\
   0 & 1 & 0 & 0 & 0 \\
   0 & 0 & 1 & 0 & 0 \\
   0 & 0 & 0 & 1 & 0 \\
   0 & 0 & 0 & 0 & 1 \\
   1 & 0 & 1 & 0 & 1 \\
   0 & 1 & 0 & 1 & 1 \\
  \end{array}\right]$$
}

The index code for the given SUICP(SNC) is given by
\begin{align*}
\mathfrak{C}&=[c_0~c_1~c_2~c_3~c_4]=[y_0~y_1~\ldots~y_6]\mathbf{L}\\&=\{\underbrace{x_{0}+x_{1}+x_2}_{y_0}+\underbrace{x_{15}+x_{16}+x_{17}}_{y_5},\\&~~~~\underbrace{x_{3}+x_{4}+x_5}_{y_1}+\underbrace{x_{18}+x_{19}}_{y_6},\\&~~~~\underbrace{x_{6}+x_{7}+x_8}_{y_2}+\underbrace{x_{15}+x_{16}+x_{17}}_{y_5},\\&~~~~ \underbrace{x_{9}+x_{10}+x_{11}}_{y_3}+\underbrace{x_{18}+x_{19}}_{y_6},\\&~~~~ \underbrace{x_{12}+x_{13}+x_{14}}_{y_4}+\underbrace{x_{15}+x_{16}+x_{17}}_{y_5}+\underbrace{x_{18}+x_{19}}_{y_6}\}.
\end{align*}

The scalar linear code $\mathfrak{C}$ can be used to decode the message $x_k$ at $R_k$ for every $k \in [0:19]$ as given in Table \ref{table1}. For example, receivers $R_0,R_1$ and $R_2$ decode their wanted messages $x_0,x_1$ and $x_2$ respectively by adding $c_0$ and $c_2$ as shown in the first row of the Table \ref{table1}. Similarly, receivers $R_6,R_7$ and $R_8$ decode their wanted messages $x_6,x_7$ and $x_8$ respectively by adding $c_2,c_3$ and $c_4$ as shown in the third row of the Table \ref{table1}.

\begin{table}[ht]
\centering
\setlength\extrarowheight{5pt}
\begin{tabular}{|c|c|}
\hline
$x_k$ & Index code symbols used to decode $x_k$ \\
\hline
$x_0,x_1,x_2$ & $c_0,c_2$ \\
\hline
$x_3,x_4,x_5$ & $c_1,c_3$ \\
\hline
$x_6,x_7,x_8$ & $c_2,c_3,c_4$ \\
\hline
$x_9,x_{10},x_{11}$ & $c_3,c_4$ \\
\hline
$x_{12},x_{13},x_{14}$ & $c_4$ \\
\hline
$x_{15},x_{16},x_{17}$ & $c_0$ \\
\hline
$x_{18},x_{19}$ & $c_1$ \\
\hline
\end{tabular}
\vspace{5pt}
\caption{Decoding for the messages given in Example \ref{ex3}}
\label{table1}
\vspace{-5pt}
\end{table}

\end{example}
\begin{note}
The length of the constructed code in Example \ref{ex3} is equal to $\left \lceil \beta(G) \right \rceil=\left \lceil \frac{13}{3} \right \rceil=5$. In Theorem \ref{thm3}, we prove that the minrank of SUICP(SNC) given in Example \ref{ex3} is $5$ and hence the constructed scalar linear index code is of optimal length.
\end{note}

\begin{theorem}
\label{thm3}
Consider SUICP(SNC) with $K$ messages, $D$ side-information after and $U$ side-information before the desired message and side-information graph  $G$. If $\beta(G)$ is a rational number and 
\begin{align}
\label{minrank2}
\nonumber
(D-U)~\text{mod}~(U+1)&+(K-D+U)~\text{mod}~(U+1)\\&  \leq K~\text{mod}~(U+1),
\end{align}
 then 
\begin{align*}
\text{minrank}(G)=\left \lceil \frac{K}{U+1} \right \rceil - \left \lfloor \frac{D-U}{U+1} \right \rfloor=\left \lceil\beta(G) \right \rceil
\end{align*}
and the constructed scalar linear index codes in Theorem \ref{thm2} are of optimal length.
\end{theorem}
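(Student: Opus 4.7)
The plan is to sandwich $\text{minrank}(G)$ from above by Theorem~\ref{thm2} and from below by the broadcast-rate bound, then identify both bounds with $\lceil \beta(G) \rceil$ via an arithmetic identity. The upper bound is immediate: the scalar code produced in Theorem~\ref{thm2} has length $\lceil K/(U+1) \rceil - \lfloor (D-U)/(U+1) \rfloor$, so $\text{minrank}(G)$ cannot exceed this. For the matching lower bound I would invoke the standard fact that the length of any scalar linear index code is at least the broadcast rate $\beta(G)$; because $\text{minrank}(G)$ is an integer, this upgrades to $\text{minrank}(G) \geq \lceil \beta(G) \rceil$.

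The core of the proof is then the arithmetic identity $\lceil K/(U+1) \rceil - \lfloor (D-U)/(U+1) \rfloor = \lceil \beta(G) \rceil$ under hypothesis~\eqref{minrank2}. I would set, by Euclidean division, $K = q_1(U+1) + r_1$, $D-U = q_2(U+1) + r_2$, and $K-D+U = q_3(U+1) + r_3$, all remainders in $[0,U]$. Adding the last two relations and comparing with the first gives $q_1(U+1) + r_1 = (q_2+q_3)(U+1) + (r_2+r_3)$, and in particular $r_1 \equiv r_2 + r_3 \pmod{U+1}$. Hypothesis~\eqref{minrank2} is precisely $r_2 + r_3 \leq r_1 < U+1$, which combined with $0 \leq r_2 + r_3 < 2(U+1)$ forces the equality $r_2 + r_3 = r_1$, and hence $q_1 = q_2 + q_3$.

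With this relation in hand I would evaluate both sides by cases on $(r_1, r_3)$. The left side equals $q_1 - q_2 = q_3$ when $r_1 = 0$ and $q_1 + 1 - q_2 = q_3 + 1$ when $r_1 > 0$; the right side $\lceil (K-D+U)/(U+1) \rceil$ equals $q_3$ when $r_3 = 0$ and $q_3 + 1$ when $r_3 > 0$. Because $r_2 + r_3 = r_1$, the pattern $r_1 = 0$ forces $r_3 = 0$, and in that case both sides read $q_3$; the pattern $r_1 > 0$ with $r_3 > 0$ gives $q_3 + 1$ on both sides. The only potentially discrepant pattern, $r_1 > 0$ together with $r_3 = 0$, corresponds exactly to $\beta(G) \in \mathbb{Z}$ and is excluded by the hypothesis that $\beta(G)$ is rational (read, as is customary here, as non-integer rational, since $\beta(G)$ is automatically in $\mathbb{Q}$). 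The main obstacle is navigating these four sign patterns cleanly; everything else is bookkeeping. Once the identity is established, the sandwich collapses to $\text{minrank}(G) = \lceil \beta(G) \rceil = \lceil K/(U+1) \rceil - \lfloor (D-U)/(U+1) \rfloor$, and the optimality of the Theorem~\ref{thm2} construction follows at once.
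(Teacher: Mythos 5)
Your proposal is correct and follows essentially the same route as the paper: sandwich $\text{minrank}(G)$ between the Theorem~\ref{thm2} construction (upper bound) and the integrality-upgraded broadcast-rate bound $\text{minrank}(G)\geq\left\lceil\beta(G)\right\rceil$ (lower bound), and reduce everything to the identity $\left\lceil\frac{K}{U+1}\right\rceil-\left\lfloor\frac{D-U}{U+1}\right\rfloor=\left\lceil\beta(G)\right\rceil$ under \eqref{minrank2}, with ``rational'' read as non-integer exactly as the paper does. The only difference is cosmetic: the paper proves $\gamma\leq\left\lceil\beta(G)\right\rceil$ by floor/ceiling manipulations and closes the gap via achievability ($\gamma\geq\beta(G)$ since a length-$\gamma$ code exists), whereas you obtain the identity directly from Euclidean division (the hypothesis forcing $r_2+r_3=r_1$ and $q_1=q_2+q_3$), which is a valid, self-contained variant of the same calculation.
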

\begin{proof}
Let $\gamma=\left \lceil \frac{K}{U+1} \right \rceil - \left \lfloor \frac{D-U}{U+1} \right \rfloor$. First we prove that if $\beta(G)$ is a rational number and $K,D$ and $U$ satisfy the condition given in \eqref{minrank2}, then 
\begin{align*}
\gamma=\left \lceil\beta(G) \right \rceil.
\end{align*}

If $K~\text{mod}~(U+1)$ is zero, the condition in \eqref{minrank2} gets satisfied only when both $(D-U)~\text{mod}~(U+1)$ and $(K-D+U)~\text{mod}~(U+1)$ are zeros. We assumed $\beta(G)$ is a rational number, that is $(K-D+U)~\text{mod}~(U+1)$ is non zero and hence \eqref{minrank2} is only satisfied if $K~\text{mod}~(U+1)$ is non zero. We have
\begin{align}
\label{mrank2}
\nonumber
\gamma&=\left \lceil \frac{K}{U+1} \right \rceil - \left \lfloor \frac{D-U}{U+1} \right \rfloor\\& \nonumber=\underbrace{\frac{K}{U+1}+\frac{(U+1)-K~\text{mod}~(U+1)}{U+1}}_{\text{follows from the fact K mod (U+1) is nonzero}}\\& \nonumber +\frac{D-U}{U+1}+\frac{(D-U)~\text{mod}~(U+1)}{U+1}=\\& \nonumber\small{\frac{K-D+U}{U+1}+1+\frac{(D-U)\text{mod}(U+1)-K\text{mod}(U+1)}{U+1}}\\& \nonumber \leq \frac{K-D+U}{U+1}+1-\underbrace{\frac{(K-D+U)~\text{mod}~(U+1)}{U+1}}_{\text{follows from \eqref{minrank2}}}  \\& \nonumber =\frac{K-D+U}{U+1}+\frac{(U+1)-(K-D+U)~\text{mod}~(U+1)}{U+1}\\& = \left \lceil \frac{K-D+U}{U+1} \right \rceil=\left \lceil \beta(G) \right \rceil
\end{align} 

Notice that $\gamma$ is an integer and we constructed scalar linear index code for SUICP(SNC) with length $\gamma$ in Theorem \ref{thm2}. Hence, we have
\begin{align}
\label{mrank1}
\gamma \geq \beta(G) > \left \lceil \beta(G) \right \rceil -1.
\end{align}

From \eqref{mrank2} and \eqref{mrank1}, we have
\begin{align}
\label{mrank3}
\gamma = \left \lceil \beta(G) \right \rceil.
\end{align}

Note that minrank can not be less than $\beta(G)$ and hence from \eqref{mrank1} and the code construction given in Theorem \eqref{thm2} with length $\gamma$, we have 
\begin{align}
\label{mrank4}
\text{minrank}(G)=\gamma=\left \lceil \beta(G) \right \rceil. 
\end{align}

For the SUICP(SNC) with $K,D,U$ satisfying the condition \eqref{minrank2}, in Theorem \ref{thm2}, we constructed scalar linear index codes with length $\gamma$ and from \eqref{mrank3} and \eqref{mrank4}, the constructed codes are optimal in length. 
\end{proof}
In Theorem \ref{thm4}, we show that the length of constructed scalar linear index codes in Theorem \ref{thm2} are atmost two index code symbols per message symbol more than the broadcast rate.
\begin{theorem}
\label{thm4}
Consider SUICP(SNC) with $K$ messages, $D$ side-information after and $U$ side-information before the desired message and side-information graph $G$. Then 
\begin{align}
\label{mrank6}
\left \lceil \frac{K}{U+1} \right \rceil - \left \lfloor \frac{D-U}{U+1} \right \rfloor < \beta(G)+2,
\end{align}
that is the constructed scalar linear index codes in Theorem \ref{thm2} are atmost two index code symbols per message symbol away from the broadcast rate.
\end{theorem}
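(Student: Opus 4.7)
The plan is to use the elementary inequalities $\lceil x \rceil < x + 1$ and $\lfloor y \rfloor > y - 1$, which hold for every real $x,y$. Applied to $x = \frac{K}{U+1}$ and $y = \frac{D-U}{U+1}$, these immediately give
\begin{align*}
\left\lceil \frac{K}{U+1} \right\rceil - \left\lfloor \frac{D-U}{U+1} \right\rfloor
&< \frac{K}{U+1} + 1 - \left( \frac{D-U}{U+1} - 1 \right) \\
&= \frac{K-D+U}{U+1} + 2.
\end{align*}

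By the capacity formula \eqref{capacity5} from Maleki et al., the right-hand side equals $\beta(G) + 2$, which is exactly the claim \eqref{mrank6}. Combining this with Theorem \ref{thm2}, which exhibits a scalar linear index code of length equal to the left-hand side of \eqref{mrank6}, yields the stated gap bound.

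There is essentially no obstacle here: the statement is a direct arithmetic consequence of the definition of $\beta(G)$ and the standard slack in ceiling/floor rounding. The only minor care needed is to note that both strict inequalities $\lceil x \rceil < x+1$ and $\lfloor y \rfloor > y-1$ hold for all real $x,y$ (including integers, since $\lceil n \rceil = n < n+1$ and $\lfloor n \rfloor = n > n-1$), so no case split on whether $K$ or $D-U$ is divisible by $U+1$ is required. The result is therefore a short two-line calculation with no further content beyond invoking the code construction from Theorem \ref{thm2}.
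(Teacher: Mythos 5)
Your proof is correct and is essentially the same argument as the paper's: the paper starts from $\beta(G)+2=\frac{K}{U+1}+1-\frac{D-U}{U+1}+1$ and bounds it below by $\left\lceil \frac{K}{U+1}\right\rceil-\left\lfloor \frac{D-U}{U+1}\right\rfloor$, which is exactly your use of $\lceil x\rceil<x+1$ and $\lfloor y\rfloor>y-1$ written in the opposite direction. Your explicit remark that strictness holds even in the divisible cases is a small clarification the paper leaves implicit, but the route is the same.
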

\begin{proof}
We have 
\begin{align*}
\beta(G)+2&=\frac{K-D+U}{U+1}+2 \\&=\frac{K}{U+1}+1-\frac{D-U}{U+1}+1\\&
> \left \lceil \frac{K}{D+1}\right \rceil -\left \lfloor \frac{D-U}{U+1} \right \rfloor.
\end{align*}
This completes the proof.
\end{proof}
The result of Theorem \ref{thm4} is illustrated using Example \ref{ex6} given below.
\begin{example}
\label{ex6}
Consider the set of SUICP(SNC) with $K=827,D=23$ and $U \in [1:10]$. For these index coding problems, the broadcast rate and the rate achieved by scalar linear index codes given in Theorem \ref{thm2} are given in Table \ref{table5}.
\begin{table}[ht]
\centering
\setlength\extrarowheight{5pt}
\begin{tabular}{|c|c|c|c|c|}
\hline
$K$ & $D$ & $U$ & Broadcast rate & Rate achieved by scalar linear \\
~& ~ & ~ & $\beta=\frac{K-D+U}{U+1}$ & index codes given in Theorem \ref{thm2}\\
\hline
827 & 23 & 1 & 402.5 & 403 \\
827 & 23 & 2 & 268.6 & 269 \\
827 & 23 & 3 & 201.7 & 202 \\
827 & 23 & 4 & 161.6 & 163 \\
827 & 23 & 5 & 134.8 & 135 \\
827 & 23 & 6 & 115.7 & 117 \\
827 & 23 & 7 & 101.3 & 102 \\
827 & 23 & 8 & 90.2 & 91 \\
827 & 23 & 9 & 81.3 & 82 \\
827 & 23 & 10 & 74.0 & 75 \\
\hline
\end{tabular}
\vspace{5pt}
\caption{}
\label{table5}
\vspace{-5pt}
\end{table}
\end{example}
In Lemma \ref{lemma1}, we construct encoding matrices for the index code constructed in Theorem \ref{thm2}.
\begin{lemma}
\label{lemma1}
Consider SUICP(SNC) with $K$ messages, $D$ side-information after and $U$ side-information before the desired message. For this SUICP(SNC), an encoding matrix $\mathbf{L}$ is given by 

$$\mathbf{L}^{(2s)}=\left[\begin{array}{*{20}c}
   \mathbf{L}_0  \\
   \mathbf{L}_1  \\
   \vdots  \\
   \mathbf{L}_{K_1-1}   \\
   \end{array}\right],
$$
$$
~\text{where} ~K_1=\left \lceil \frac{K}{U+1}\right \rceil,~
\mathbf{L}_k=\left.\left[\begin{array}{*{20}c}
   L_k  \\
   L_k  \\
   \vdots  \\
   L_k  \\   
   \end{array}\right]\right\rbrace (U+1)~\\ \text{rows}~\text{for}$$
   
      $$
k \in [0:K_1-2]~~\text{and}~~\mathbf{L}_{K_1-1}=\left.\left[\begin{array}{*{20}c}
   L_k  \\
   L_k  \\
   \vdots  \\
   L_k  \\   
   \end{array}\right]\right\rbrace K~\text{mod}~(U+1)~\\ \text{rows}$$
and $L_k$ is the $k$th row of AIR matrix $\mathbf{L}$ of size $K_1 \times (K_1-D_1)$ for $k \in [0:K_1-1]$. 
\end{lemma}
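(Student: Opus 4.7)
The strategy is to unfold the index code from Theorem~\ref{thm2} expressed over the extended messages $y_k$ into an equivalent expression over the original messages $x_j$, and read off the coefficient matrix.

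Concretely, the plan is as follows. First I would recall from Theorem~\ref{thm2} that the broadcast vector is
\begin{align*}
c = y\mathbf{L} = \sum_{k=0}^{K_1-1} y_k L_k,
\end{align*}
where $L_k$ denotes the $k$-th row of the $K_1\times(K_1-D_1)$ AIR matrix $\mathbf{L}$. To obtain an encoding matrix acting directly on $x = (x_0,x_1,\ldots,x_{K-1})$, I would substitute the definition \eqref{ext1} of the $y_k$'s into this sum. For each $k \in [0:K_1-2]$, the extended symbol $y_k$ is the sum of the $(U+1)$ consecutive messages $x_{(U+1)k},\ldots,x_{(U+1)k+U}$, so each of these original messages contributes the row $L_k$ to the final linear combination. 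For $k = K_1-1$, only the first $K~\text{mod}~(U+1)$ of those positions exist, contributing the row $L_{K_1-1}$ each.

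Next I would rearrange the double sum and collect terms indexed by the original message $x_j$: for each $j \in [0:K-1]$, write $j = (U+1)k + i$ with $0 \le i \le U$ (and $i < K~\text{mod}~(U+1)$ in the last block); then the coefficient of $x_j$ in $c$ is exactly $L_k$. Stacking these coefficients into a $K \times (K_1-D_1)$ matrix $\mathbf{L}^{(2s)}$ and grouping by $k$ gives the block form in the statement, where $\mathbf{L}_k$ consists of $(U+1)$ copies of $L_k$ stacked vertically for $k \in [0:K_1-2]$ and $\mathbf{L}_{K_1-1}$ consists of $K~\text{mod}~(U+1)$ copies of $L_{K_1-1}$ stacked vertically.

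Finally I would verify that $x\mathbf{L}^{(2s)} = y\mathbf{L} = c$ by comparing the two sides row by row, confirming that the constructed matrix indeed implements the code of Theorem~\ref{thm2}. There is no real obstacle here beyond bookkeeping of the indices; the only point that needs mild care is the last block, where the number of stacked copies of $L_{K_1-1}$ drops from $U+1$ to $K~\text{mod}~(U+1)$ precisely because $y_{K_1-1}$ is defined as a sum over the truncated range $[0:K~\text{mod}~(U+1)-1]$ in \eqref{ext1}. Handling the case $K~\text{mod}~(U+1) = 0$ separately (in which $y_{K_1-1}$ has $U+1$ terms like the others, and the last block is a full $(U+1)$-stack) completes the argument.
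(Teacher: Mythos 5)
Your proposal is correct and follows essentially the same route as the paper's proof: both expand the block product so that the $(U+1)$ (or $K~\text{mod}~(U+1)$) repeated copies of each row $L_k$ collect the consecutive messages into the extended symbols $y_k$, giving $x\mathbf{L}^{(2s)} = y\mathbf{L}$, which is the code of Theorem~\ref{thm2}. Your extra remark on the $K~\text{mod}~(U+1)=0$ case is a harmless refinement the paper does not spell out.
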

\begin{proof}
We have $[x_0~x_1~\ldots~x_{K-1}]\mathbf{L}^{(2s)}$ 
\begin{align*}
=&[x_0~x_1~\ldots~x_{U}]\mathbf{L}_0+[x_{U+1}~x_{U+2}~\ldots~x_{2U+1}]\mathbf{L}_1
\\&+\ldots+[x_{(K_1-2)(U+1)}~\ldots~x_{(K_1-2)(U+1)+U}]\mathbf{L}_{K_1-2}
\\&+\ldots+[x_{(K_1-1)(U+1)} \ldots x_{K-3}~x_{K-2}~x_{K-1}]\mathbf{L}_{K_1-1}
\end{align*}
\begin{align}
\label{tss}
\nonumber
=&L_0 (\sum_{i=0}^{U} x_{i})+L_1 (\sum_{i=0}^{U} x_{(U+1)+i})+\ldots\\&\nonumber+L_{K_1-2} (\sum_{i=0}^{U} x_{(U+1)(K_1-2)+i})\\& \nonumber +L_{K_1-1} (\sum_{i=0}^{K~\text{mod}~(U+1)} x_{(U+1)(K_1-2)+i}). 
\\&
=L_0 y_0+L_1 y_1+\ldots+L_{K_1-1} y_{K_1-1}=y\mathbf{L}, 
\end{align}
and it is the index code given in \eqref{code} of Theorem \ref{thm2}.
This completes the proof.
\end{proof}

Birk and Kol defined \cite{BiK} the partial clique and gave an index coding scheme for a given SUICP based on the partial cliques of the side-information graph. A directed graph $G$ with $K$ vertices is a $\kappa(G)$-partial clique iff outdeg$(x_j) \geq (K-1-\kappa)$, $\forall \ x_j \in V(G)$, and there exists a $\ x_k \in V(G)$ such that outdeg$(x_k)=(K-1-\kappa)$. For an index coding problem whose side information graph is a $\kappa(G)$-partial clique, maximum distance separable (MDS) code of length $K$ and dimension $\kappa+1$, over a finite field $\mathbb{F}_q$ for $q \geq K$, can be used as an index code. The side-information graph $G$ of SUICP(SNC) is a $(K-D-U-1)-$partial clique and MDS code of length $K$ and dimension $K-D-U$ can be used as index code. 

We proved that the minrank of SUICP(SNC) for specific combinations of $K,U$ and $D$ is $\left \lceil \frac{K}{U+1} \right \rceil - \left \lfloor \frac{D-U}{U+1} \right \rfloor$.  We conjecture that for SUICP(SNC) with arbitrary $K,D$ and $U$, minrank is minimum of $\left \lceil \frac{K}{U+1} \right \rceil - \left \lfloor \frac{D-U}{U+1} \right \rfloor$ and $K-D-U$.

\section{Discussion}
\label{sec4}
We constructed scalar linear index codes for SUICP(SNC) with arbitrary $K,D$ and $U$ that can achieve a rate within two index code symbols per message symbol from the optimal rate achieved by vector linear index codes. The constructed codes are useful in delay critical environment like real time video streaming in topological interference management problems.


\begin{thebibliography}{99}
\bibitem{BiK}
Y. Birk and T. Kol, ``Coding-on-demand by an informed-source (ISCOD) for efficient broadcast of different supplemental data to caching clients", in IEEE \textit{Trans. Inf. Theory,}, vol. 52, no.6, pp.2825-2830, June 2006.

\bibitem{OnH}
L Ong and C K Ho, ``Optimal Index Codes for a Class of Multicast Networks with Receiver side-information'', in \textit{Proc. IEEE ICC}, 2012, pp. 2213-2218.

\bibitem{YBJK}
Z. Bar-Yossef, Z. Birk, T. S. Jayram, and T. Kol, ``Index coding with side-information", in IEEE \textit{Trans. Inf. Theory,}, vol. 57, no.3, pp.1479-1494, Mar. 2011.

\bibitem{ICVLP}
A. Blasiak, R. Kleinberg and E. Lubetzky, ``Broadcasting with side-information: Bounding and approximating the broadcast rate", in IEEE \textit{Trans. Inf. Theory,}, vol. 59, no.9, pp.5811-5823, Sep. 2013.

\bibitem{MCJ}
H. Maleki, V. Cadambe, and S. Jafar, ``Index coding – an interference alignment perspective", in IEEE \textit{Trans. Inf. Theory,}, vol. 60, no.9, pp.5402-5432, Sep. 2014.

\bibitem{TIM}
S. A. Jafar, “Topological interference management through index coding,” IEEE Trans. Inf. Theory, vol. 60, no. 1, pp. 529–568, Jan. 2014.

\bibitem{VaR2}
M. B. Vaddi and B. S. Rajan, ``Optimal scalar linear index codes for one-sided neighboring side-information problems,'' \textit{In Proc. IEEE GLOBECOM Workshop on Network Coding and Applications}, Washington, USA, December 2016.

\bibitem{VaR1}
M. B. Vaddi and B. S. Rajan, ``Optimal vector linear index codes for some symmetric multiple unicast problems,'' in \textit{Proc. IEEE ISIT}, Barcelona, Spain July 2016, pp. 125-129.

\bibitem{VaR4}
M. B. Vaddi and B. S. Rajan, ``Reduced dimensional optimal vector linear index codes for index coding problems with symmetric neighboring and consecutive side-information,'' in arXiv:1801.00406v1 [cs.IT]  Jan.1,  2018.

\bibitem{VaR3}
M. B. Vaddi and B. S. Rajan, ``Low-complexity decoding for symmetric, neighboring and consecutive side-information index coding problems,'' in arXiv:1705.03192v2 [cs.IT] 16 May 2017.

\bibitem{ECIC}
S.~H. Dau, V. Skachek, and Y.~M. Chee, ``Error correction for index coding with side information,'' \textit{IEEE Trans. Inf. Theory}, vol. 59, no. 3, pp. 1517-1531, March 2013.


\end{thebibliography}
\end{document}